\documentclass[11pt]{article}

\usepackage[T1]{fontenc}
\usepackage{lmodern}
\usepackage[letterpaper,margin=1in]{geometry}
\usepackage{microtype}
\usepackage{amsmath,amssymb,amsthm,mathtools}
\usepackage{booktabs}
\usepackage{longtable}
\usepackage{tabularx}
\usepackage{array}
\usepackage{enumitem}
\usepackage{xcolor}
\usepackage{listings}
\usepackage{float}
\usepackage{hyperref}
\usepackage[capitalise,noabbrev]{cleveref}

\hypersetup{
  colorlinks=true,
  linkcolor=blue!45!black,
  citecolor=green!35!black,
  urlcolor=blue!55!black,
  pdftitle={Machine-Checked Arithmetic Bit Complexity of the Kannan--Bachem Smith Normal Form in Lean 4},
  pdfauthor={Junye Ji}
}

\setlist{nosep}
\newtheorem{theorem}{Theorem}[section]
\newtheorem{lemma}[theorem]{Lemma}
\newtheorem{proposition}[theorem]{Proposition}

\theoremstyle{definition}

\AddToHook{env/theorem/begin}{\crefalias{section}{theorem}}
\AddToHook{env/lemma/begin}{\crefalias{section}{lemma}}
\AddToHook{env/proposition/begin}{\crefalias{section}{proposition}}
\AddToHook{env/corollary/begin}{\crefalias{section}{corollary}}
\AddToHook{env/definition/begin}{\crefalias{section}{definition}}
\AddToHook{env/remark/begin}{\crefalias{section}{remark}}

\newcommand{\Z}{\mathbb{Z}}
\newcommand{\N}{\mathbb{N}}

\newcommand{\bitlen}{\operatorname{bitlen}}
\newcommand{\natsize}{\operatorname{size}_2}
\newcommand{\enc}{\operatorname{enc}}
\newcommand{\cost}{\operatorname{cost}}
\newcommand{\AddCost}{\operatorname{AddCost}}
\newcommand{\MulCost}{\operatorname{MulCost}}
\newcommand{\DivCost}{\operatorname{DivCost}}

\newcommand{\Lean}{\textsc{Lean}}
\newcommand{\mathlib}{\textsc{mathlib}}
\newcommand{\code}[1]{\texttt{\detokenize{#1}}}
\newcommand{\decl}[1]{\nolinkurl{#1}}
\newcommand{\mcode}[1]{\text{\ttfamily\detokenize{#1}}}
\newcommand{\concat}{\mathbin{\|\!\|}}
\definecolor{codebg}{RGB}{247,247,247}
\definecolor{codekw}{RGB}{28,73,122}
\definecolor{codecom}{RGB}{85,105,85}
\lstdefinelanguage{Lean}{
  morekeywords={def,theorem,structure,where,let,by,exact,public,if,then,else,match,with,fun,forall,exists},
  sensitive=true,
  morecomment=[l]{--},
  morecomment=[s]{/-}{-/},
  morestring=[b]"
}
\title{Machine-Checked Arithmetic Bit Complexity of the\\
Kannan--Bachem Smith Normal Form in \Lean~4}
\author{Junye Ji\\University of Washington}
\date{July 2026}

\begin{document}
\maketitle

\begin{abstract}
We formalize in \Lean~4 the Kannan--Bachem Smith normal form algorithm for nonsingular square integer matrices.  The program returns $S,U,U^{-1},V,V^{-1}$ and proves $UAV=S$, $U^{-1}SV^{-1}=A$, four inverse identities, the Smith divisibility conditions, and equality of $S$ with a canonical reference matrix.  Stabilization terminates because each recursive pass strictly decreases the binary size of the active pivot; the outer algorithm recurses on the lower-right block.  The computation also emits a flat trace of designated sign-magnitude arithmetic leaves.  Branch conditions, quotients, Bezout data, and matrix entries are taken from the recorded primitive runs.  Composite phases form their traces by concatenating the charge lists returned by the executed children.  Verified self-delimiting codecs define the input and output sizes.  Coefficient and work recurrences, closed by a kernel-checked polynomial-envelope calculus, give fixed polynomial bounds for both trace cost and the encoded length of the five output matrices.  The theorem concerns these arithmetic primitives; structural operations and compiled Lean runtime are outside the model.
\end{abstract}

\section{Introduction}

For $A\in\Z^{n\times n}$, Smith normal form gives a diagonal representative of the two-sided unimodular equivalence class of $A$.  In the nonsingular square case considered here, the target is
\[
  S=\operatorname{diag}(s_1,\ldots,s_n),
  \qquad 0<s_1\mid s_2\mid\cdots\mid s_n,
\]
with unimodular $U,V$ such that $UAV=S$.  The diagonal matrix is canonical, but the transforming matrices are not.  Returning $U,V$ and their inverses makes the equivalence checkable in both directions.

The algorithmic issue is intermediate coefficient growth.  Standard gcd-based elimination can create large integers even though the normal form exists.  Kannan and Bachem gave Hermite and Smith algorithms whose algebraic operation counts and intermediate binary lengths are polynomially bounded in the binary input length; they also construct the multiplier matrices~\cite{kannan1979}.  Their Smith procedure alternates left-Hermite and Hermite phases.  When a divisibility test fails, the next pass replaces the active pivot by a proper divisor.  That step controls both termination and coefficient growth.

We formalize this algorithm in \Lean~4~\cite{demoura2021} over \mathlib~\cite{mathlib2020}.  The input is a nonsingular square integer matrix.  The resource theorem counts specified binary sign-magnitude arithmetic primitives; \cref{sec:costboundary} states the boundary of the model.

We prove five main results.

\begin{enumerate}[label=(\arabic*)]
  \item The executable result contains $(S,U,U^{-1},V,V^{-1})$.  It proves the forward and reverse transformation equations, the left and right inverse identities for both transformations, the Smith predicate, and equality of $S$ with a canonical reference construction.

  \item Totality uses exactly two descents.  Stabilization is well founded on \code{Nat.size} of the active pivot's absolute value, and the outer recursion decreases the matrix dimension.

  \item Principal HNF, bounded-column HNF, Smith searches, clearing, injection, and four-product certificate composition are value-producing instrumented executions.  Their arithmetic leaves form the trace whose cost is folded by the final theorem.

  \item A prefix-free codec is verified for square matrices and for the five-matrix output.  The final size parameter is the codec length, which includes the dimension and every matrix entry.

  \item A small polynomial-envelope calculus closes the coefficient and work recurrences.  The transparent degree definitions reduce to $2{,}150{,}687$ for arithmetic cost and $98{,}990$ for output encoding length.  These are conservative witnesses produced by the generic closure rules, not sharp exponents.
\end{enumerate}

Coq and Isabelle developments prove Smith normal form in broader algebraic settings~\cite{cano2016,divason2022}.  Here the scope is integer square matrices so that one executable Kannan--Bachem reduction, its inverse certificates, and its arithmetic cost can be verified together.  The closest methodological precedent is the efficient integer implementation and polynomial-time analysis of LLL in Isabelle/HOL~\cite{bottesch2018}.

\section{Specification and concrete encodings}
\label{sec:statement}

\subsection{The five-matrix result}

For a square input $A$, the result contains five data matrices
\[
   S,\quad U,\quad U^{-1},\quad V,\quad V^{-1}.
\]
The reduction accumulates the inverse matrices, so no determinant or adjugate reconstruction is needed after computing the Smith matrix.  The final record proves
\begin{align}
  UAV &= S, \label{eq:forward}\\
  U^{-1}SV^{-1} &= A, \label{eq:backward}\\
  U^{-1}U=UU^{-1}&=I, \label{eq:uinv}\\
  V^{-1}V=VV^{-1}&=I. \label{eq:vinv}
\end{align}
It also proves the finite-index Smith predicate and identifies $S$ with the canonical reference Smith matrix already specified in the library.  Only $S$ is canonical.  The output-size theorem concerns the particular deterministic certificate produced by this implementation.

\subsection{Binary size and integer atoms}

For $m\in\N$, write
\[
 \natsize(m)=\mcode{Nat.size}(m)=
 \begin{cases}
  0,&m=0,\\
  1+\lfloor\log_2 m\rfloor,&m>0.
 \end{cases}
\]
For $z\in\Z$, define $\bitlen(z)=\natsize(|z|)$.  This is the implementation's \code{integerBitLength}.

A raw payload is made self-delimiting by pair framing:
\[
  0\mapsto 00,\qquad 1\mapsto 11,\qquad
  \text{terminator}\mapsto 01,
\]
with $10$ rejected.  The canonical integer payload is one sign/zero bit followed, for a nonzero value, by the positive binary magnitude.  Framing gives the exact atom length
\begin{equation}
  |\enc_{\Z}(z)|=2\bigl(\bitlen(z)+2\bigr).
  \label{eq:intatom}
\end{equation}
The decoder rejects the negative-zero payload.

\subsection{Matrix and Smith-output codecs}
\label{sec:codec}

The output codec is assembled from a typed matrix codec.  For a packed square matrix $M\in\Z^{n\times n}$,
\begin{align*}
  \mcode{matrixPayload}(M)
   &= \mcode{frame}(\mcode{encodeNat}(n))
      \concat \mathop{\concat}_{i=1}^{n}
      \mathop{\concat}_{j=1}^{n} \enc_{\Z}(M_{ij}),\\
  \enc_{\mathrm{mat}}(M)
   &= \mcode{frame}(\mcode{matrixPayload}(M)).
\end{align*}
Thus every encoded matrix carries its own framed dimension atom and its own outer frame.  If
\[
  \mcode{matrixBinarySize}(M)=|\mcode{matrixPayload}(M)|,
\]
then
\begin{equation}
  |\enc_{\mathrm{mat}}(M)|
    =2\,\mcode{matrixBinarySize}(M)+2.
  \label{eq:matrixlen}
\end{equation}

The Smith output is an outer frame around five complete typed-matrix encodings:
\begin{equation}
\begin{split}
  \enc_{\mathrm{out}}(r)=\mcode{frame}(&
    \enc_{\mathrm{mat}}(S)\concat
    \enc_{\mathrm{mat}}(U)\concat
    \enc_{\mathrm{mat}}(U^{-1})\concat{}\\[-0.2em]
    &\enc_{\mathrm{mat}}(V)\concat
    \enc_{\mathrm{mat}}(V^{-1})).
\end{split}
\label{eq:outputcodec}
\end{equation}
Consequently,
\begin{equation}
\begin{split}
 |\enc_{\mathrm{out}}(r)|=2\bigl(&|\enc_{\mathrm{mat}}(S)|
  +|\enc_{\mathrm{mat}}(U)|
  +|\enc_{\mathrm{mat}}(U^{-1})|\\[-0.2em]
  &+|\enc_{\mathrm{mat}}(V)|
  +|\enc_{\mathrm{mat}}(V^{-1})|\bigr)+2.
\end{split}
 \label{eq:outputlenexpanded}
\end{equation}

Both codecs use prefix decoders and prove a streaming round trip with an arbitrary suffix:
\[
 \operatorname{decodePrefix}(\enc(x)\concat t)
   =\operatorname{some}(x,t).
\]
This implies prefix-freeness.  The output decoder returns a data-only packed record containing the five matrices; proof fields are not serialized.  Encoding length counts symbols in an abstract binary alphabet, not bytes in Lean's heap representation of \code{List Bool}.

For later closed bounds, if every entry of an $n\times n$ matrix has bit length at most $b$, the verified matrix bound is
\begin{equation}
  L_{\mathrm{mat}}(n,b)
   =2\left(2\natsize(n)+2+n^2\,2(b+2)\right)+2,
 \label{eq:matrixencodingbound}
\end{equation}
and the five-matrix output bound is
\begin{equation}
  L_{\mathrm{out}}(n,b)=2\bigl(5L_{\mathrm{mat}}(n,b)\bigr)+2.
 \label{eq:outputencodingbound}
\end{equation}

\subsection{Main theorem and explicit witnesses}

Let $\mathcal E(A)$ be the instrumented execution defined in \cref{sec:execution}, and let $T(A)$ be its flat arithmetic trace.  Define
\[
  \cost(A)=\sum_{e\in T(A)}\cost(e).
\]
The source defines
\begin{align*}
 d_c&=\mcode{smithCostPolynomialDegree}, &
 C_c&=\mcode{smithCostPolynomialCoefficient}=2^{d_c},\\
 d_o&=\mcode{smithOutputPolynomialDegree}, &
 C_o&=\mcode{smithOutputPolynomialCoefficient}=2^{d_o}.
\end{align*}

\begin{theorem}[Machine-checked endpoint]
\label{thm:main}
For every $n$ and every $A\in\Z^{n\times n}$ with $\det A\ne0$, the execution returns $S,U,U^{-1},V,V^{-1}$ satisfying \eqref{eq:forward}--\eqref{eq:vinv}.  The matrix $S$ is in Smith normal form and equals the canonical reference Smith matrix.  Moreover,
\begin{align}
  \cost(A)
   &\le C_c\bigl(|\enc_{\mathrm{mat}}(A)|+1\bigr)^{d_c},
      \label{eq:maincost}\\
  |\enc_{\mathrm{out}}(S,U,U^{-1},V,V^{-1})|
   &\le C_o\bigl(|\enc_{\mathrm{mat}}(A)|+1\bigr)^{d_o}.
      \label{eq:mainoutput}
\end{align}
The four constants are independent of $n$ and of the entries of $A$.
\end{theorem}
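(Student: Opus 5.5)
The proof splits into a correctness half and a resource half, assembled at the end into the single endpoint record.

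\emph{Correctness.} Each phase returns its output matrices together with transformation certificates. These are principal HNF, bounded-column HNF, the left-Hermite/Hermite alternation of stabilization, clearing of the first row and column, and injection of the recursive block. A phase's certificate is a pair of unimodular matrices with explicit inverses such that the phase output equals the left factor times the input times the right factor, and the four inverse identities hold. These certificates compose under products: if $U_1AV_1=B$ and $U_2BV_2=C$, then $(U_2U_1)A(V_1V_2)=C$, with inverses $U_1^{-1}U_2^{-1}$ and $V_2^{-1}V_1^{-1}$. This is the four-product composition, and it yields \eqref{eq:forward}--\eqref{eq:vinv} by induction on the dimension.

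Totality comes from the two descents. Stabilization recurses only when the divisibility test fails, and then the new pivot is a proper divisor of the old one, so $\natsize(|\text{pivot}|)$ strictly drops. The outer recursion acts on the $(n-1)\times(n-1)$ lower-right block. For the Smith predicate, I use that after stabilization the pivot divides every entry of the block. Since divisibility of all entries is preserved under unimodular equivalence, the pivot divides the first diagonal entry of the recursive Smith form, which gives the chain $s_1\mid s_2\mid\cdots$. Positivity follows from $\det A\neq 0$ together with sign normalization. Equality with the canonical reference matrix then follows from the library's uniqueness of the Smith form up to equivalence, i.e.\ invariance of the determinantal divisors.

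\emph{Resource bounds.} Let $N=|\enc_{\mathrm{mat}}(A)|+1$. By \eqref{eq:intatom} and the matrix codec, both $n\le N$ and $\max_{ij}\bitlen(A_{ij})\le N$. The core is the Kannan--Bachem coefficient bound. I prove a recurrence stating that every entry produced in a phase, including the entries of $U$, $V$ and their inverses, has bit length bounded by a fixed polynomial envelope in $(n,b)$, where $b$ is the input bit bound. The HNF bound comes from the minors/determinant argument: entries are bounded in terms of $\bitlen(\det A)\le n(b+\natsize(n))$, and the bounded-column reduction keeps entries below the diagonal modulus.

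Each stabilization pass is bounded by the same envelope, because the pivot only shrinks. The number of passes is at most $\natsize$ of the initial pivot, which is itself polynomially bounded. The outer recursion then nests this $n$ times. Work recurrences take the form $W(n)\le W(n-1)+P(n,b)$, where the per-phase count multiplies the number of arithmetic leaves by the per-leaf sign-magnitude cost $\AddCost$, $\MulCost$, $\DivCost$. The per-leaf cost is polynomial in operand bit lengths, which are bounded by the coefficient envelope. The trace cost is the sum over concatenated child traces, so cost is additive over phases.

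These recurrences are closed with the polynomial-envelope calculus. Envelopes of the form $C(N)^d$ with $C=2^d$ are closed under sums, products, composition, and bounded iteration, with explicit degree arithmetic. Feeding the recurrences through these rules produces $d_c$, and substituting the entry envelope into \eqref{eq:outputencodingbound} produces $d_o$. That establishes \eqref{eq:maincost} and \eqref{eq:mainoutput}, with constants independent of $n$ and $A$.

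The main obstacle is the coefficient bound across stabilization and the unimodular multipliers. Bounding the entries of $U$ and $V^{-1}$ after many left/right HNF alternations needs a bound that does not compound with the number of passes. My first approach is to re-derive each multiplier's entry bound from the HNF of the current matrix, with the pivot bounded by $|\det A|$, rather than multiplying bounds along the iteration. If that fails for accumulated products, I fall back on bounding the pass count by $\bitlen(\det A)$ and accept the larger (still polynomial) degree, which is consistent with the enormous conservative exponents reported.
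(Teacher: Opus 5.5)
Your outline follows the paper's route, and the correctness half matches it: four-product composition, the two descents, $p$ dividing the first invariant factor of the recursive Smith form, and uniqueness. You should add that the recursive call is legal because $\det\begin{pmatrix}p&0\\0&B\end{pmatrix}=p\det B$ forces $\det B\neq0$.

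The resource half, however, misstates the invariant that makes the bound polynomial. Entries do not stay bounded ``because the pivot only shrinks''. Pivot descent bounds the \emph{number} of passes, not the size of the other entries. What keeps every pass input at width at most $w+1$ is that each pass ends in Hermite-reduced form, whose entries are bounded by $|\det|$ (invariant up to sign). Clearing and injection then cost at most one extra bit.

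The outer recursion needs the same fact. The lower block is Hermite-reduced and $|\det B|=|\det A|/|p|$, so the recursive call satisfies the same hypotheses at the \emph{same} width $w=\max(b,D(n,b))$ of \eqref{eq:workwidth}. That is why $\mathsf{Smith}(d+1,w)=\mathsf{Stab}(d+1,w)+\mathsf{Smith}(d,w)+\cdots$ is additive with $w$ fixed. Your recurrence $W(n)\le W(n-1)+P(n,b)$, with $b$ the original input width, is not justified as written, because the lower block's entries can exceed $b$. If instead the block is analysed at its own width (``nests this $n$ times''), the substitution rule multiplies degrees at every level. The exponent then becomes exponential in $n$, which contradicts the $n$-independence of $d_c$.

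On the multipliers, your first approach cannot work. For an accumulated two-sided certificate $UAV=C$, neither $U$ nor $V$ is determined by $A$ and $C$. A Hermite/Cramer bound on the current matrix therefore controls only a single phase factor (this is how the paper gets $U_C$, $I_C$, $P_M$, $P_I$), not the product. The paper's argument is your fallback made precise:
\begin{itemize}
\item bit widths add, plus $\natsize$ of the dimension, under each dense product;
\item every pass factor has width at most $t_0(k,w+1,w)$, which depends only on the fixed $w$;
\item pivot-size induction bounds the number of factors composed in one stabilization by $3w+2$, giving $R(d,w)=(3w+2)W_{\mathrm{comp}}(d-1,w)$;
\item the outer recursion adds these widths as in \eqref{eq:transformrec}.
\end{itemize}
So certificate widths do grow with the number of passes, but only linearly, and that is harmless.

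Finally, the separate bounds $n\le N$ and $\max_{i,j}\bitlen(A_{ij})\le N$ give only $n+b+2\le 2N+2$. That is too weak for $C_c(|\enc_{\mathrm{mat}}(A)|+1)^{d_c}$ with $C_c=2^{d_c}$. You need the joint inequality $n+b+2\le 2(\mcode{matrixBinarySize}(A)+1)$, followed by \eqref{eq:matrixlen}.
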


In the accompanying Lean artifact, definitional reduction of the two transparent degree fields gives
\begin{equation}
 d_c=2{,}150{,}687,
 \qquad
 d_o=98{,}990.
 \label{eq:reduceddegrees}
\end{equation}
Thus $C_c=2^{2{,}150{,}687}$ and $C_o=2^{98{,}990}$.  Appendix~\ref{app:ledger} displays the recurrence and degree ledger from which these reductions arise.

The formal endpoint \decl{verifiedSmithPolynomialBitCost} returns the execution and the proofs summarized in \cref{thm:main}.

\section{The Kannan--Bachem computation}
\label{sec:algorithm}

\subsection{Correspondence with the 1979 algorithm}

The program follows the control structure of Kannan and Bachem's Smith procedure.  Several operations are transposed, and every transformation carries an explicit inverse.  \Cref{tab:kbmap} records the correspondence.

\begin{table}[ht]
\centering
\footnotesize
\begin{tabularx}{\linewidth}{@{}p{0.16\linewidth}p{0.30\linewidth}X@{}}
\toprule
Kannan--Bachem & Paper-level operation & Formal mechanism \\
\midrule
HNF Step 2 & column permutation in the original orientation; row preparation after transposition & deterministic complementary-minor scan \\
SNF Step 4 & left HNF of the active column & bounded-column / left-Hermite execution \\
SNF Step 5 & HNF of the active lower block & prepared principal HNF on a transpose \\
SNF Step 6 & repeat if the active column is not cleared & costed below-pivot search, another pass \\
SNF Step 7 & find an entry not divisible by the pivot and add its column to the active column & exact clear, costed lower-block search, injection, another pass \\
\bottomrule
\end{tabularx}
\caption{Correspondence with the original HNF and SNF steps~\protect\cite{kannan1979}.}
\label{tab:kbmap}
\end{table}

The formalization adds the mechanisms in \cref{tab:changes}; none changes the underlying Smith reduction.

\begin{table}[ht]
\centering
\footnotesize
\begin{tabularx}{\linewidth}{@{}p{0.31\linewidth}p{0.20\linewidth}X@{}}
\toprule
Mechanism & Status & Reason \\
\midrule
Deterministic complementary-minor preparation & equivalent implementation choice & supplies the principal-minor precondition without an external row-reduction oracle \\
Bird division-free determinant & arithmetic backend & gives executable determinant values and scalar add/multiply traces without division side conditions \\
Explicit $U^{-1}$ and $V^{-1}$ & strengthened certificate & supports the reverse equation and local inverse checking \\
Value-producing arithmetic trace & proof instrumentation & aligns branch data and matrix entries with the charged primitive runs \\
Four dense products per certificate composition & explicit materialization & pays for forward and inverse products on both sides \\
\bottomrule
\end{tabularx}
\caption{Implementation mechanisms not stated as such in the 1979 pseudocode.}
\label{tab:changes}
\end{table}

\subsection{Prepared principal Hermite reduction}

The principal HNF coefficient analysis assumes that every leading principal minor is nonzero.  An arbitrary nonsingular matrix need not have that property.  At dimension $m+1$, preparation evaluates the $m+1$ complementary minors obtained by deleting one row and the last column, selects the first nonzero minor, moves the corresponding row last, and recurses on the selected minor.  Expansion along the last column proves that some candidate is nonzero.

Determinants are evaluated by a verified implementation of Bird's division-free algorithm~\cite{bird2011}.  The scan binds each costed determinant run once and branches on its value; there is no uncharged fallback scan.

After preparation, the principal HNF kernel has three scalar transition forms.

\begin{description}[style=nextline,leftmargin=1.5em]
  \item[Bezout elimination.] One opaque bounded-Bezout-block run returns the gcd, two Bezout coefficients, two exact quotients when the gcd is nonzero, and mutually inverse $2\times2$ row matrices.

  \item[Reduction above a pivot.] One quotient/remainder run determines the row-addition coefficient.

  \item[Normalization.] One unit run returns the canonical unit $\pm1$.
\end{description}

A transition state contains a current matrix $B$, a multiplier $U$, and an explicit inverse $U^{-1}$.  If the scalar transition returns $F$ and $F^{-1}$, three instrumented dense products compute
\[
  B'=FB,\qquad U'=FU,\qquad U'^{-1}=U^{-1}F^{-1}.
\]
These products preserve $UA=B$ and both inverse identities.  Their traces also account for the arithmetic used to materialize all three updated matrices.

\subsection{One Smith pass}

A pass on a positive-dimensional active block has two phases.

\begin{enumerate}
  \item The bounded-column HNF execution applies row operations until the first column is in one-column Hermite form.
  \item Prepared principal HNF is applied to the transpose of the result.  Transposing its left transformation gives a right transformation for the original orientation.
\end{enumerate}

The two phase certificates are composed.  Composition executes four dense products: forward and inverse products on the left and on the right.  The pass theorem gives a nonzero normalized pivot and a cleared first row; the new pivot divides every entry of the pass input's first column.

\subsection{Pivot stabilization: entry contract and postcondition}
\label{sec:stabilize}

The stabilization recursion is entered with the contract produced by a pass:
\begin{enumerate}
  \item $p=A_{00}\ne0$;
  \item $p$ is normalized;
  \item the first row is zero away from $p$.
\end{enumerate}
The first column need not yet be cleared, and $p$ need not yet divide the lower-right block.  Those are stable-pivot \emph{postconditions}, not loop-entry invariants.

The control flow is:

\begin{center}
\begin{minipage}{0.94\linewidth}
\begin{lstlisting}[language={},basicstyle=\ttfamily\small]
stabilizeFrom(A):
  search below p for A[i,0] with p not dividing A[i,0]
  if found:
      P := passExecution(A)
      return compose(P, stabilizeFrom(P.result))

  C := clear all divisible entries below p by exact shears
  search C's lower-right block for C[i,j] with p not dividing C[i,j]
  if none exists:
      return C                 -- stable-pivot postcondition holds

  I := inject C[i,j] into the first column without changing p
  P := passExecution(I.result)
  return compose(C, I, P, stabilizeFrom(P.result))
\end{lstlisting}
\end{minipage}
\end{center}

Because the entry contract does not clear the first column, the function searches that column first.  If no witness is found, every entry below $p$ is divisible by $p$, and exact shears clear the column.  If the lower search also finds no witness, the result satisfies the full stable-pivot predicate:
\begin{enumerate}
  \item $p$ is nonzero and normalized;
  \item the first row and first column are cleared;
  \item $p$ divides every lower-right entry.
\end{enumerate}
If the lower search succeeds, injection preserves the pivot and moves the selected witness into the first column.  The next pass therefore falls under the same proper-divisor descent as the direct first-column branch.

\subsection{Outer recursion}

A stable matrix has block form
\[
  \begin{pmatrix}p&0\\0&B\end{pmatrix},
  \qquad p\mid B_{ij}.
\]
The lower-right block is nonsingular because the stabilized matrix is two-sided equivalent to the input and
\[
  \det\begin{pmatrix}p&0\\0&B\end{pmatrix}=p\det B.
\]
The algorithm recursively computes the Smith form of $B$, lifts that certificate by adjoining $p$, and composes it with the stabilization certificate.  The dimension decreases by one.

\section{Correctness and canonicality}
\label{sec:correctness}

\subsection{Local reversible transformations}

A reversible row step stores a forward matrix $F$, a backward matrix $F^{-1}$, and proofs
\[
  F^{-1}F=I,\qquad FF^{-1}=I.
\]
Column steps are obtained by transposition.  Permutations, row additions, unit scalings, bounded Bezout blocks, clearing shears, and witness injections instantiate this interface.  Composition assembles inverse matrices in reverse order.  The stored inverse identities then prove unimodularity without reconstructing it from determinants.

\subsection{Pass and stabilization facts}

The two Hermite phases imply that a pass returns a nonzero normalized pivot, a cleared first row, and divisibility of every old first-column entry by the new pivot.  Clearing preserves the pivot and first row and zeros the first column whenever the required divisibilities hold.  Injection preserves the pivot and copies a selected lower-right witness into the first column.  If both searches fail after clearing, the stable-pivot predicate follows directly; no generic Smith fallback is invoked.

\subsection{Assembly and canonicality}

After recursive Smith reduction of $B$, the lifted lower certificate acts only on the lower block.  Since $p$ divides every entry of $B$, it divides every entry after multiplication by integer matrices, hence divides the first invariant factor of the recursive Smith form.  Adjoining $p$ therefore yields a Smith divisibility chain for the full matrix.

Induction on dimension establishes the forward equation, reverse equation, inverse identities, and Smith predicate.  The result is then equated to the semantic reference matrix by uniqueness of normalized Smith forms.

\begin{proposition}[Strong functional correctness]
For every nonsingular square integer matrix $A$, the pure function and the instrumented execution return the same five data matrices.  They satisfy \eqref{eq:forward}--\eqref{eq:vinv}; $S$ satisfies the Smith predicate and equals the canonical reference Smith matrix.
\end{proposition}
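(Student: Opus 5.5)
The proof has two parts. First, the instrumented execution agrees with the pure function on data. Second, the correctness facts established earlier in this section are assembled by strong induction on the dimension $n$. For the agreement I would prove a simulation lemma for each primitive and each composite phase. Every instrumented primitive run returns a pair (value, charge list) whose value component is definitionally the pure primitive's output; the paper stipulates that branch conditions, quotients, Bezout data and matrix entries are read from the recorded runs. The lemma then lifts through the bounded-column HNF, the prepared principal HNF (including the complementary-minor scan with Bird determinants), the pass, clearing, injection, and four-product composition. Because branches are taken on equal values, both executions follow the same control path. For the two recursive functions I would argue by well-founded induction on the same measures that justify termination: $\natsize(|p|)$ for \code{stabilizeFrom} and $n$ for the outer recursion. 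Thus the induction hypothesis applies exactly at the recursive calls.

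For correctness, I would first establish an invariant for every certificate record: a certificate $(B,U,U^{-1},V,V^{-1})$ over input $A$ satisfies $UAV=B$ together with \eqref{eq:uinv}--\eqref{eq:vinv}. From these, \eqref{eq:backward} follows by multiplying by the stored inverses. The local reversible steps satisfy this invariant by their stored inverse proofs, and composition preserves it: the inverses are multiplied in reverse order, and associativity gives the identities. Transposition turns the left certificate of the principal HNF phase into a right certificate. Stabilization then returns such a certificate whose result satisfies the stable-pivot predicate, by induction on $\natsize(|p|)$. There are three branches. In the first-column-witness branch, the pass theorem gives a new pivot dividing the old non-divisible entry, hence a proper divisor with strictly smaller size. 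In the clear-then-inject branch, injection preserves $p$ and places a non-divisible witness in the first column, so the same descent applies. In the terminal branch, the two failed searches give the predicate directly.

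For the outer step, the stable matrix is $\operatorname{diag}(p,B)$ with $p\mid B_{ij}$. Taking determinants in $UAV=\operatorname{diag}(p,B)$ and using unimodularity, which follows from \eqref{eq:uinv}--\eqref{eq:vinv} via $\det U\det U^{-1}=1$, shows $p\det B=\pm\det A\neq0$, so $B$ is nonsingular. The induction hypothesis yields a Smith certificate $(S',U',U'^{-1},V',V'^{-1})$ for $B$. Lifting it by adjoining $1$ in the corner preserves the invariant. Since $S'=U'BV'$ has entries that are integer combinations of the $B_{ij}$, $p$ divides $s'_1$, and $p>0$ by normalization, so the Smith chain extends. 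The base case $n=0$ is the empty certificate. Finally, canonicality follows because $S$ is a normalized Smith form two-sided equivalent to $A$, so the library's uniqueness theorem for Smith forms, via determinantal divisors or invariant factors, identifies it with the reference matrix.

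I expect the main obstacle to be the simulation lemma, not the algebra. Its difficulty is bookkeeping: the instrumented versions are defined with dependent pattern matches on costed runs, and relating them to the pure definitions requires an equation lemma at each branch that survives the well-founded recursion. I would first try to define the pure function as the projection of the instrumented one and prove the equation lemmas by unfolding, avoiding a separate simulation proof. If that fails, I would fall back to stating a per-phase lemma of the form "\code{(exec x).value = pure x}" and rewriting at each call.
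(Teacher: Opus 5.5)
Your proposal is correct and follows essentially the paper's route. The paper's argument has the same parts: reversible steps with stored inverses, composed in reverse order; pass, clearing and injection facts giving the stable-pivot predicate by induction on $\natsize(|p|)$; nonsingularity of the lower block from $\det\operatorname{diag}(p,B)=p\det B$; lifting the recursive certificate and using that $p$ divides every integer combination of the $B_{ij}$; and canonicality by uniqueness of normalized Smith forms. Pure and instrumented agreement likewise rests on each leaf's stored proof that its run is the named primitive, which may be propositional rather than definitional equality, so your fallback of rewriting at each call is the plan that will actually be needed.
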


\section{Termination by binary pivot descent}
\label{sec:termination}

The outer recursion is structural in the dimension.  The inner argument is arithmetic.

\begin{lemma}[Binary-size bridge]
\label{lem:sizebridge}
If $0<a$ and $2a\le b$, then $\natsize(a)<\natsize(b)$.
\end{lemma}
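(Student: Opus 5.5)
The plan is to reduce the claim to the elementary fact that doubling a positive natural number increases its binary size by exactly one, and then to combine this with monotonicity of $\natsize$. Concretely, we intend to show two facts: first, for $0<a$, $\natsize(2a)=\natsize(a)+1$; second, $\natsize$ is monotone, so $2a\le b$ gives $\natsize(2a)\le\natsize(b)$. Chaining the two yields $\natsize(a)<\natsize(a)+1=\natsize(2a)\le\natsize(b)$, which is the statement of \cref{lem:sizebridge}.

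For the doubling identity, use the closed form of $\natsize$ recalled in \cref{sec:statement}. Since $a>0$ we have $2a>0$, and $\lfloor\log_2(2a)\rfloor=1+\lfloor\log_2 a\rfloor$. Hence $\natsize(2a)=1+\lfloor\log_2 2a\rfloor=\natsize(a)+1$. In the Lean development this is the \mathlib{} lemma relating \code{Nat.size} of \code{bit0}/a left shift, or equivalently \code{Nat.size_shiftLeft} specialized to a shift of $1$, whose side condition is exactly $a\ne0$. For monotonicity, invoke \code{Nat.size_le_size}. Alternatively, characterize $\natsize(m)\le k$ as $m<2^k$, which is \code{Nat.size_le}; then $b<2^{\natsize(b)}$ and $2a\le b$ give $a<2^{\natsize(b)-1}$. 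This last inequality can be rearranged directly into the strict inequality, provided $\natsize(b)\ge1$, and that holds because $b\ge2a\ge2>0$.

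The only delicate point is the positivity hypothesis. Without $0<a$ the claim fails, since $\natsize(0)=\natsize(0)$. So the main obstacle is simply keeping the case $a=0$ excluded when applying the doubling lemma, or equivalently, in the $2^k$ formulation, justifying that $\natsize(b)$ is positive before subtracting one. Everything else is a direct chain of inequalities.
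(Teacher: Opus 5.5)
Your argument is correct, but it is organized differently from the paper's.

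The paper never computes $\natsize(2a)$. It uses $a>0$ to get the lower bound $2^{\natsize(a)-1}\le a$, doubles it to $2^{\natsize(a)}\le 2a\le b$, and reads off $\natsize(a)<\natsize(b)$ from the characterization $2^k\le b\iff k<\natsize(b)$.

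Your main route instead goes through the exact identity $\natsize(2a)=\natsize(a)+1$, followed by monotonicity of $\natsize$. This isolates a sharper, reusable fact: doubling raises the size by exactly one. It also turns the lemma into a composition of two library facts. The cost is that you need the doubling identity itself; in Lean, $2a$ must be rewritten as a one-bit shift before \code{Nat.size_shiftLeft} applies. The paper's sandwich needs only the power-of-two characterization and one truncated subtraction.

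Your alternative route is the mirror image of the paper's. You bound $b$ from above by $2^{\natsize(b)}$ and apply the upper-bound characterization \code{Nat.size_le} to $a$. The paper bounds $a$ from below and applies the lower-bound characterization to $b$. One payoff of your version is that positivity enters only through $b>0$. So it actually proves the conclusion under the weaker hypothesis $0<b$. Correspondingly, the only failure when $0<a$ is dropped is the case $a=b=0$ that you mention.
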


\begin{proof}
For $a>0$,
\[
  2^{\natsize(a)-1}\le a<2^{\natsize(a)}.
\]
Multiplying the left inequality by $2$ gives $2^{\natsize(a)}\le2a\le b$.  The characterization of \code{Nat.size} by powers of two therefore yields $\natsize(a)<\natsize(b)$.
\end{proof}

Let $p$ be the current pivot, $e$ a first-column entry with $p\nmid e$, and $q$ the pivot produced by the next pass.  The pass shape proves
\[
  q\mid p,\qquad q\mid e.
\]
If $|q|=|p|$, then $p\mid q$ and hence $p\mid e$, a contradiction.  Writing $|p|=k|q|$, nonzeroness and strictness exclude $k=0,1$, so
\begin{equation}
  2|q|\le |p|.
  \label{eq:halving}
\end{equation}
Since $q\ne0$, Lemma~\ref{lem:sizebridge} gives
\begin{equation}
  \natsize(|q|)<\natsize(|p|).
  \label{eq:sizedescent}
\end{equation}

The injection branch uses the same lemma.  Clearing and injection preserve $p$, and the chosen lower-right witness becomes a first-column entry not divisible by $p$.  The subsequent pass thus satisfies \eqref{eq:sizedescent}.

The actual termination measure is
\[
  \mu(A)=\mcode{Nat.size}(|A_{00}|).
\]
Every recursive stabilization call carries a proof that $\mu$ decreases.  The same induction gives
\[
  \mcode{passes}(\mcode{stabilizeFrom}(A))
    \le \mcode{Nat.size}(|A_{00}|),
\]
and, including the initial pass, a stabilization bound in terms of the input matrix width.  Totality and the multiplicative ``number of descent levels'' factor in the cost recurrence therefore use the same discrete quantity.

\section{Execution-aligned arithmetic traces}
\label{sec:execution}

\subsection{Primitive values and exact costs}

The arithmetic primitives operate on canonical sign-magnitude integers.  For operands $x,y$, let $\ell_x=\bitlen(x)$ and $\ell_y=\bitlen(y)$.  \Cref{tab:primitives} lists each charge form and its proved closed bound.  The ``exact cost'' column names the recursive counter stored in the primitive run.  The final theorem folds that counter.

\begin{table}[ht]
\centering
\footnotesize
\begin{tabularx}{\linewidth}{@{}p{0.17\linewidth}p{0.27\linewidth}X@{}}
\toprule
Leaf & Exact stored cost & Verified upper bound \\
\midrule
zero test & $1$ & $1$ \\
magnitude comparison & structural constructor count $C_{\mathrm{cmp}}(x,y)$ & $\ell_x+\ell_y+1$ \\
addition & \code{signMagnitudeAddSteps} & $1+2(\ell_x+\ell_y+1)^2$ \\
multiplication & \code{signMagnitudeMulSteps} & $1+\ell_y\bigl(1+(2\ell_x+\ell_y+2)^2\bigr)$ \\
quotient/remainder & \code{signMagnitudeDivModSteps} & $8+\ell_x(3+2\ell_y)+3\ell_x+3\ell_y$ \\
normalization unit & $1$ & $1$ \\
bounded Bezout block & one bounded-XGCD run, one gcd zero test, and, if $g\ne0$, two quotient/remainder runs & path-sensitive sum of those stored runs \\
\bottomrule
\end{tabularx}
\caption{Designated arithmetic leaves.  The arguments of the division row are numerator $x$ and divisor $y$.}
\label{tab:primitives}
\end{table}

If $X(x,y)$ is the stored bounded-XGCD cost and $Q(a,b)$ is the stored quotient/remainder cost, the opaque block has cost
\[
 B_{\mathrm{Bez}}(x,y)=
 \begin{cases}
   X(x,y)+1,&g=0,\\
   X(x,y)+1+Q(x,g)+Q(y,g),&g\ne0.
 \end{cases}
\]
Its value contains $g$, both Bezout coefficients, both exact quotients, and mutually inverse $2\times2$ matrices.  The branch comparison inside bounded XGCD is itself a costed structural magnitude comparison.

\subsection{The zero-cost structural boundary}
\label{sec:costboundary}

The theorem concerns the arithmetic grammar in \cref{tab:primitives}.  The following operations lie outside that grammar:

\begin{itemize}
  \item interpreting an integer coefficient in the sign-magnitude model is not charged as a separate conversion;
  \item matrix indexing, finite-index construction, transposition, reindexing, list concatenation, and construction of records or proof fields are structural and carry zero arithmetic cost;
  \item determinant evaluation, row or column updates, and dense products have no aggregate parent charge: they emit the addition and multiplication leaves they execute;
  \item the bounded Bezout block is represented by one opaque macro charge whose stored cost includes its internal XGCD, gcd test, and optional exact divisions; those internals are not emitted as separate child charges;
  \item decoding, serialization, storage allocation, garbage collection, compilation, and wall-clock execution lie outside the theorem.
\end{itemize}

The flat trace is assembled from the charge lists carried by the executed transitions and recursive children.  Matrix traversal and runtime representation remain outside this resource semantics.

\subsection{Shared runs, trace construction, and coverage}

Each leaf stores its operands, its primitive run, and a proof that the run is exactly the named primitive on those operands.  The algorithm uses the stored run's value; the trace fold uses the same run's cost.

Principal HNF and bounded-column HNF share a state transition kernel.  A Bezout transition binds one bounded-Bezout run.  Its forward matrix, inverse matrix, and opaque charge all refer to that run.  A reduction-above transition similarly binds one quotient/remainder run.  Three instrumented dense products then compute $B'$, $U'$, and $U'^{-1}$, and later control flow reads the resulting $B'$.

A coverage derivation connects the trace to the exact transition sequence: every transition refines the corresponding pure row step, the trace contains every transition charge exactly once, and each Bezout transformation and charge share the same stored run.

Smith execution composes covered HNF executions with costed divisibility searches, clearing, injection, certificate composition, and lower-right recursion.  A certificate composition executes exactly four dense products.

\begin{proposition}[Trace alignment]
For every nonsingular input, the execution value equals the pure Kannan--Bachem value.  The complete trace is assembled by concatenating the recorded charge lists of the covered transitions and recursive calls.  For each Bezout transition, the transformation matrices and the block charge refer to the same stored block run.  The reported cost is definitionally the fold of that trace.
\end{proposition}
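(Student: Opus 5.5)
The plan is to prove the proposition by induction that mirrors the way the execution is built. Pure and instrumented versions are paired at every layer: primitive leaf, scalar transition, HNF execution, pass, stabilization, outer recursion. At each layer the claim is a simulation lemma with three parts: the instrumented value projects to the pure value, the instrumented trace equals the concatenation of the children's recorded charge lists, and the reported cost is defined as a fold of that list.

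At the leaves, each leaf stores its operands, a primitive run, and a proof that the run is the named primitive applied to those operands. From this, equality of the run's value with the pure sign-magnitude operation is immediate, and so is equality with the \mathlib{} integer operation via the canonical-representation lemma. For a Bezout transition, the forward matrix, inverse matrix and opaque charge are all built from the single bound run. Their sharing therefore holds by construction, and I would state it as a definitional equation rather than prove it.

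For the three instrumented dense products in a transition, I would prove once that an instrumented product returns the pure matrix product and emits exactly its addition and multiplication leaves. Then $B'$, $U'$ and $U'^{-1}$ agree with the pure row step. This gives the coverage derivation for HNF as an induction on the transition sequence. Each step refines the pure row step, and its charge list is appended exactly once.

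Passes, searches, clearing, injection and four-product composition follow the same pattern. Each is a composite whose trace is, by definition, the append of its children's traces. The value equality then follows from the child equalities and the composition lemma for certificates.

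Stabilization and the outer recursion need a well-founded induction. For stabilization, I would use the measure $\mu(A)=\natsize(|A_{00}|)$ with the descent \eqref{eq:sizedescent}; for the outer recursion, induction on dimension. The key point is that the instrumented branch conditions are read from costed search runs whose values equal the pure search results. So both programs take the same branch, and the induction hypothesis applies to the same recursive argument.

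I expect this branch-alignment step to be the main obstacle, in two respects. The recursive calls of the pure and instrumented stabilization must be shown to receive propositionally equal matrices, including the termination proofs carried along with them, which calls for a proof-irrelevance rewrite. The Bird determinant scan in the preparation phase likewise needs its branch on a costed value to match the pure scan. I would handle this by first stating an unfolding equation for each well-founded definition, and then rewriting with the value-equality of the bound run before splitting cases. The final statement, that the reported cost is the fold of the trace, should hold by \texttt{rfl} once cost is defined as that fold.
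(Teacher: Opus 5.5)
Your proposal is correct and follows essentially the same route as the paper. Leaves and Bezout transitions share one stored run by construction. A coverage derivation shows that each HNF transition refines the pure row step and contributes its charge exactly once, and composite phases concatenate their children's charge lists. Branches agree because control flow reads the values of the costed runs, using well-founded induction on the pivot's binary size for stabilization and induction on dimension for the outer recursion. The cost is the trace fold by definition. The one hypothesis you should carry explicitly through the outer induction is nonsingularity of the lower-right block $B$: it follows from two-sided equivalence with the input and the factorization of the stabilized determinant as $p\det B$.
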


\section{Coefficient bounds and explicit work recurrences}
\label{sec:bounds}

This section gives the dependency spine used in the proof.  Appendix~\ref{app:ledger} expands every auxiliary function appearing below, so the degree ledger can be recalculated directly from the displayed formulas.  We use $n$ for a matrix dimension, $d$ for the current active dimension, $k=d-1$, and $b,w$ for coefficient-width bounds; sans-serif symbols denote cost functions.

For a finite integer matrix $A$, write
\begin{equation}
 H(A)=\max_{i,j}|A_{ij}|,
 \qquad
 \mcode{matrixBitLength}(A)=\natsize\bigl(H(A)\bigr),
 \label{eq:matrixheightwidth}
\end{equation}
where the maximum is $0$ when the matrix has no entries.

\Cref{tab:boundspine} records the main domination theorems connecting actual trace folds to the recurrences displayed in this section.  Their width and nonsingularity assumptions are supplied by the preceding execution stage.

\begin{table}[ht]
\centering
\scriptsize
\begin{tabularx}{\linewidth}{@{}p{0.24\linewidth}p{0.31\linewidth}X@{}}
\toprule
Stage & Bounded execution quantity & Paper bound \\
\midrule
principal HNF & principal trace & $\mathsf{PHNF}$, \eqref{eq:phnf} \\
prepared HNF & prepared-principal trace & $\mathsf{Prepared}$, \eqref{eq:preparedprincipal} \\
bounded column & bounded-column trace & $\mathsf{Column}$, \eqref{eq:columnbound} \\
Smith pass & one pass trace & $\mathsf{Pass}$, \eqref{eq:passbound} \\
stabilization & full stabilization trace & $\mathsf{Stab}$, \eqref{eq:stabbound} \\
outer Smith recursion & complete Smith trace & $\mathsf{Smith}$, \eqref{eq:smithrec} \\
\bottomrule
\end{tabularx}
\caption{The theorem spine from instrumented executions to explicit work recurrences.}
\label{tab:boundspine}
\end{table}

\subsection{Scalar, determinant, and dense-product bounds}

For coefficient widths $x,y$, write
\begin{align}
 \AddCost(x,y)&=1+2(x+y+1)^2,\label{eq:addbound}\\
 \MulCost(x,y)&=1+y\bigl(1+(2x+y+2)^2\bigr),\label{eq:mulbound}\\
 \DivCost(x,y)&=8+x(3+2y)+3x+3y.\label{eq:divbound}
\end{align}
These are the addition, multiplication, and quotient/remainder bounds from \cref{tab:primitives}.  A length-$n$ dot-product term and a dense square product are bounded by
\begin{align}
 \mathsf T(n,l,r)
  &=\MulCost(l,r)+
    \AddCost\bigl(l+r,n(l+r+1)\bigr),\label{eq:dotbound}\\
 \mathsf{MatMul}(n,l,r)
  &=n^3\mathsf T(n,l,r).\label{eq:matmulbound}
\end{align}
Four-product certificate composition is
\begin{equation}
 \mathsf{Comp}(n,l,r)=
  2\mathsf{MatMul}(n,l,r)+2\mathsf{MatMul}(n,r,l).
 \label{eq:compbound}
\end{equation}
The asymmetry preserves the asymmetric multiplication counter.

For an $n\times n$ matrix of entry width $b$, the determinant-width bound is
\begin{equation}
  D(n,b)=n\bigl(\natsize(n)+b\bigr)+2.
  \label{eq:detbound}
\end{equation}
It follows from $|\det A|\le n!H(A)^n$ and the verified size lemmas for factorials, products, and powers.  The cost of one Bird determinant evaluation is the closed function $\mathsf{Det}(n,b)$ in \eqref{eq:detexecapp}.  Preparation is the recurrence
\begin{align}
 \mathsf{Prep}(0,b)&=0,\label{eq:prep0}\\
 \mathsf{Prep}(k+1,b)&=(k+1)\bigl(\mathsf{Det}(k,b)+1\bigr)
                       +\mathsf{Prep}(k,b).
 \label{eq:preprec}
\end{align}
The extra one is the charged zero test after each candidate determinant.

\subsection{Hermite widths and work}

The ready principal-HNF proof supplies closed widths
\[
 P(n,b),\qquad U_P(n,b),\qquad I_P(n,b)
\]
for the current matrix, forward multiplier, and inverse multiplier.  Their exact definitions are \eqref{eq:principalwidths}--\eqref{eq:principalinverseapp}.  Let
\begin{equation}
 \rho(n,f,i)=\natsize(n)+f+i
 \label{eq:transformwidth}
\end{equation}
and
\begin{equation}
\begin{split}
 \mathsf{Dense}(n,s,b,f,i)=s\bigl(&\mathsf{MatMul}(n,\rho(n,f,i),b)\\
  &+2\mathsf{MatMul}(n,\rho(n,f,i),\rho(n,f,i))\bigr).
\end{split}
\label{eq:densereplay}
\end{equation}
Here $s$ is the transition count.  The scalar-transition bound $\mathsf{Scalar}(w)$ is expanded in \eqref{eq:scalarapp}; it includes one bounded-XGCD block, its comparison control, external exact divisions, and the alternatives used by reduction-above and normalization.

The principal execution bound is exactly
\begin{equation}
 \mathsf{PHNF}(n,b)=n^3\mathsf{Scalar}(P(n,b))+
  \mathsf{Dense}\bigl(n,n^3,P(n,b),U_P(n,b),I_P(n,b)\bigr).
 \label{eq:phnf}
\end{equation}
Prepared principal HNF adds preparation and the two products that transport its certificate:
\begin{equation}
\begin{split}
 \mathsf{Prepared}(n,b)={}&\mathsf{Prep}(n,b)+\mathsf{PHNF}(n,b)\\
 &+\mathsf{MatMul}(n,U_P(n,b),1)
  +\mathsf{MatMul}(n,1,I_P(n,b)).
\end{split}
\label{eq:preparedprincipal}
\end{equation}

For bounded-column HNF, Appendix~\ref{app:ledger} defines the matrix and certificate widths
$B_C(n,b),U_C(n,b),I_C(n,b)$.  Its execution bound is
\begin{equation}
 \mathsf{Column}(n,b)=n\mathsf{Scalar}(b)+
   \mathsf{Dense}\bigl(n,n,B_C(n,b),U_C(n,b),I_C(n,b)\bigr).
 \label{eq:columnbound}
\end{equation}

For an active block of order $k+1$, $\delta$ denotes the determinant/minor width bound supplied to the prepared phase.  Let $t(k,b,\delta)$ be the maximum of the four phase-certificate widths in \eqref{eq:passtransformapp}, with the phase width rounded up to one.  In the Smith pass, $\delta$ is instantiated by the current determinant-width invariant.  The pass bound is
\begin{equation}
\begin{split}
 \mathsf{Pass}(k,b,\delta)={}&
   \mathsf{Column}(k+1,b)\\
 &+\mathsf{Prepared}\bigl(k+1,B_C(k+1,b)\bigr)\\
 &+\mathsf{Comp}\bigl(k+1,t(k,b,\delta),t(k,b,\delta)\bigr).
\end{split}
\label{eq:passbound}
\end{equation}

\subsection{Stabilization recurrence}

One divisibility decision has bound
\begin{equation}
 \mathsf{Search}(w)=2+\DivCost(w,w).
 \label{eq:searchbound}
\end{equation}
The two zero tests are separated by a quotient/remainder run when the divisor is nonzero.  Clearing and injection are
\begin{align}
 \mathsf{Clear}(d,w)&=(d-1)\DivCost(w,w)
       +\mathsf{MatMul}(d,w+1,w),\label{eq:clearbound}\\
 \mathsf{Inject}(d,w)&=\mathsf{MatMul}(d,w,1).
 \label{eq:injectbound}
\end{align}

Stabilization uses the following certificate-width quantities:
\begin{align}
 F(k,w)&=\max\bigl(t_0(k,w+1,w),w+1\bigr),\label{eq:stabfactor}\\
 W_{\mathrm{comp}}(k,w)&=\natsize(k+1)+F(k,w),\label{eq:stabcompositionwidth}\\
 R(d,w)&=(3w+2)W_{\mathrm{comp}}(d-1,w),\label{eq:stabtransformwidth}\\
 t_s(d,w)&=\natsize(d)+2R(d,w),\label{eq:stabexecwidth}\\
 \mathsf{Comp}_s(d,w)&=\mathsf{Comp}(d,t_s(d,w),t_s(d,w)).
 \label{eq:stabcompcost}
\end{align}
Here $t_0$ is the unrounded maximum in \eqref{eq:passtransformapp}; $F$ accounts for one pass certificate or the identity-sized continuation, $W_{\mathrm{comp}}$ for one certificate composition, and $R$ for the at-most $3w+2$ factor-composition contribution proved by pivot-size induction.

At dimension $d$, the below-pivot search examines at most $d-1$ entries and the lower-right search at most $(d-1)^2$.  The verified per-descent-level bound is
\begin{equation}
\begin{split}
 \mathsf{Level}(d,w)={}&
  (d-1)\mathsf{Search}(w)
  +(d-1)^2\mathsf{Search}(w)\\
 &+\mathsf{Pass}(d-1,w+1,w)
  +\mathsf{Clear}(d,w)+\mathsf{Inject}(d,w)\\
 &+3\mathsf{Comp}_s(d,w).
\end{split}
\label{eq:levelbound}
\end{equation}
The three stabilization branches contribute the trace fragments in \cref{tab:stabledger}.
\begin{table}[H]
\centering
\footnotesize
\begin{tabularx}{\linewidth}{@{}p{0.20\linewidth}X@{}}
\toprule
Branch & Charged execution fragments \\
\midrule
direct repeat & below-pivot search, one pass, recursive stabilization, one certificate composition \\
stable return & below-pivot search, clear, lower-block search \\
injection repeat & below-pivot search, clear, lower-block search, injection, one pass, recursive stabilization, three certificate compositions \\
\bottomrule
\end{tabularx}
\caption{Branch ledger for one stabilization level.  Equation~\eqref{eq:levelbound} takes a uniform bound over these fragments.}
\label{tab:stabledger}
\end{table}
The coefficient three in \eqref{eq:levelbound} comes from the injection branch.  Since each continuation strictly decreases a positive pivot size,
\begin{equation}
  \mathsf{Stab}(d,w)=(w+1)\mathsf{Level}(d,w).
  \label{eq:stabbound}
\end{equation}
The theorem uses the same strong induction on pivot size as the termination proof.

\subsection{Outer Smith recurrence and output width}

The recursively accumulated certificate width is
\begin{align}
 \mathsf{Tr}(0,w)&=0,\label{eq:transformrec0}\\
 \mathsf{Tr}(d+1,w)&=\natsize(d+1)+R(d+1,w)
       +\max\bigl(1,\mathsf{Tr}(d,w)\bigr).
 \label{eq:transformrec}
\end{align}
The actual arithmetic trace is bounded by
\begin{align}
 \mathsf{Smith}(0,w)&=0,\label{eq:smithrec0}\\
 \mathsf{Smith}(d+1,w)&=
  \mathsf{Stab}(d+1,w)+\mathsf{Smith}(d,w)\notag\\
 &\quad+\mathsf{Comp}\bigl(d+1,R(d+1,w),
                    \max(1,\mathsf{Tr}(d,w))\bigr).
 \label{eq:smithrec}
\end{align}
The lower-right recursion uses the same work width $w$: stabilized entries and the determinant of the lower block are bounded by the current determinant width.

For original input width $b$, set
\begin{equation}
 W(n,b)=\max\bigl(b,D(n,b)\bigr).
 \label{eq:workwidth}
\end{equation}
The two-parameter recursive work bound is
\begin{equation}
 \mathsf{Exec}(n,b)=\mathsf{Smith}(n,W(n,b)).
 \label{eq:closedexec}
\end{equation}
This is not yet the final one-variable monomial witness in \cref{thm:main}; that additional closure and codec substitution is performed in \cref{sec:poly}.

The transformation-certificate width and common output-entry width are
\begin{align}
 \mathsf{Cert}(n,b)&=\mathsf{Tr}(n,W(n,b)),\label{eq:certbound}\\
 b_{\mathrm{out}}(n,b)&=\max\bigl(D(n,b),\mathsf{Cert}(n,b)\bigr).
 \label{eq:outputentry}
\end{align}
Substituting $b_{\mathrm{out}}$ into \eqref{eq:outputencodingbound} gives the concrete five-matrix output work bound.

\section{Polynomial closure and the concrete witnesses}
\label{sec:poly}

The closure layer uses the base
\[
  \mathsf{base}(n,b)=n+b+2.
\]
A function $f:\N^2\to\N$ has a \emph{polynomial envelope} when the formal object stores a degree $d$ and a proof
\[
  f(n,b)\le\mathsf{base}(n,b)^d
\]
for all $n,b$.  The calculus includes constants, the two projections, addition, multiplication, maximum, fixed powers, truncated subtraction, \code{Nat.size}, monotone substitution, and additive dimension recursion.

The constructor degrees are
\begin{align}
 \deg(\text{constant }a)&=a, & \deg(n)&=\deg(b)=1,\notag\\
 \deg(f+g)&=\max(\deg f,\deg g)+1,
 &\deg(fg)&=\deg f+\deg g,\label{eq:degrules}\\
 \deg(\max(f,g))&=\max(\deg f,\deg g),
 &\deg(f^a)&=a\deg f,\notag\\
 \deg(\natsize\circ f)&=\max(\deg f,1)+1,\notag\\
 \deg\bigl(h(f,g)\bigr)&=(\max(\deg f,\deg g)+2)\deg h.\notag
\end{align}
An additive dimension recurrence receives both a base envelope and a step envelope and assigns two plus their maximum degree.  These rules are intentionally coarse: even the constant $a$ receives degree $a$.

Appendix~\ref{app:ledger} applies \eqref{eq:degrules} line by line to the recurrences in \cref{sec:bounds}.  The decisive part of the resulting ledger is
\begin{align*}
 \deg \mathsf{Tr}&=744, &
 \deg W&=5, &
 \deg \mathsf{Cert}&=(\max(1,5)+2)\cdot744=5{,}208,\\
 \deg \mathsf{Pass}&=307{,}232, &
 \deg \mathsf{Stab}&=307{,}238, &
 \deg \mathsf{Smith}&=307{,}241.
\end{align*}
The two final substitution steps are then
\begin{align}
 d_c&=(\max(1,5)+2)\cdot307{,}241
      =2{,}150{,}687,\label{eq:costdegreecalc}\\
 d_o&=(\max(1,5{,}208)+2)\cdot19
      =98{,}990.\label{eq:outputdegreecalc}
\end{align}
The factor $19$ is the envelope degree of the five-matrix output codec at a supplied entry width.

To replace the two-parameter base by the concrete input payload, the development proves
\[
 \mathsf{base}\bigl(n,\mcode{matrixBitLength}(A)\bigr)
   \le 2\bigl(\mcode{matrixBinarySize}(A)+1\bigr).
\]
Raising both sides to degree $d$ contributes the coefficient $2^d$.  Monotonicity from payload size to full encoding length gives \eqref{eq:maincost} and \eqref{eq:mainoutput}.

The analysis uses three levels of upper bound:
\begin{enumerate}
  \item the path-independent but still explicit recursive work functions such as $\mathsf{Exec}(n,b)$;
  \item their bivariate \code{PolyEnvelope} monomials in $n+b+2$;
  \item the final univariate witnesses $C_c(L+1)^{d_c}$ and $C_o(L+1)^{d_o}$ in the concrete encoding length $L$.
\end{enumerate}
The degree growth comes from specific closure steps.  Generic addition adds one degree at every use, uniform maxima discard branch information, and dense products use one worst-case width for all entries.  Dimension recursion applies one uniform per-level step, while monotone substitution multiplies outer and inner degrees.  The resulting witnesses certify polynomiality; their exponents are not sharp.

\section{Related work}
\label{sec:related}

\Cref{tab:related} compares the formal statements proved by the closest developments.

\begin{table}[ht]
\centering
\footnotesize
\renewcommand{\arraystretch}{1.08}
\begin{tabularx}{\linewidth}{@{}>{\raggedright\arraybackslash}p{0.15\linewidth}>{\raggedright\arraybackslash}p{0.20\linewidth}>{\raggedright\arraybackslash}p{0.17\linewidth}>{\raggedright\arraybackslash}p{0.22\linewidth}>{\raggedright\arraybackslash}X@{}}
\toprule
Work & Setting & Executable reduction & Certificate / canonicality & Complexity result \\
\midrule
Cano et al.~\cite{cano2016} & elementary divisor rings; constructive instances & verified algebraic algorithms & Smith form and uniqueness up to units & no concrete integer bit-cost theorem \\
Divason--Thiemann~\cite{divason2022} & arbitrary dimensions; Euclidean-domain instances & executable under computability assumptions & canonical Smith form and uniqueness & no efficient Kannan--Bachem bit-cost theorem \\
Bottesch et al.~\cite{bottesch2018} & integer LLL reduction & efficient integer implementation & reduced-basis correctness & formal polynomial running-time result \\
Ma et al.~\cite{ma2026} & reusable Lean decomposition schemas & instance-dependent certified proof paths & broad decomposition theorems, including SNF & no execution-aligned integer SNF cost theorem \\
This work & nonsingular square integer matrices & value-producing Kannan--Bachem execution & canonical $S$ and explicit $U,U^{-1},V,V^{-1}$ & fixed-polynomial binary arithmetic cost and output length \\
\bottomrule
\end{tabularx}
\caption{Comparison by formal theorem boundary.}
\label{tab:related}
\end{table}

Sergeraert analyzes extensions of the Kannan--Bachem organization to rectangular, rank-deficient matrices and identifies uncontrolled coefficient growth in a naive extension~\cite{sergeraert2024}.  The reported obstacle lies outside the original nonsingular square setting formalized here.

Cano et al. formalize constructive linear algebra over elementary divisor rings in Coq~\cite{cano2016}.  Divason and Thiemann give executable Smith algorithms for arbitrary matrix shapes in Isabelle/HOL under a small algebraic interface, instantiated in particular by Euclidean domains~\cite{divason2022}.  These developments cover broader algebraic settings.  They do not analyze the binary arithmetic cost of the integer Kannan--Bachem execution used here.

Bottesch, Haslbeck, and Thiemann provide the closest methodological precedent.  Their Isabelle/HOL development refines LLL to an efficient integer implementation and proves polynomial running-time bounds~\cite{bottesch2018}.  Time-credit and refinement frameworks offer more general accounts of verified complexity~\cite{zhan2018,gueneau2018}; our trace semantics is specialized to exact arithmetic and leaves structural cost unmodeled.

Ma, Wang, and Wen formalize reusable proof architectures for triangular, spectral, canonical, and Smith decompositions in Lean~\cite{ma2026}.  Their emphasis is theorem reuse across decomposition families.  Our result concerns one integer reduction and includes coefficient bounds, a binary codec, and an arithmetic-cost theorem for its execution.

\section{Limitations}
\label{sec:limitations}

\Cref{tab:resourcelayers} separates the proved arithmetic model from two stronger resource models.
\begin{table}[H]
\centering
\footnotesize
\begin{tabularx}{\linewidth}{@{}p{0.24\linewidth}p{0.20\linewidth}X@{}}
\toprule
Layer & Status & Content \\
\midrule
arithmetic trace & proved & exact stored costs of the designated sign-magnitude leaves, trace construction and coverage, and a fixed-polynomial bound \\
structural/RAM cost & not formalized & indexing, traversal, representation conversion, allocation, list and record construction, decoding, and serialization \\
compiled runtime & not formalized & compiler, garbage collector, and wall-clock execution \\
\bottomrule
\end{tabularx}
\caption{Resource-model boundary.}
\label{tab:resourcelayers}
\end{table}
A machine-time theorem would need a structural trace or an imperative refinement, followed by a simulation of the relevant data-structure operations in a stated machine model.

The trace uses two levels of granularity.  Addition, multiplication, and quotient/remainder are fine-grained leaves.  The bounded Bezout block is represented by one opaque charge for its bounded-XGCD computation and two optional exact divisions.  Comparing implementations would require expanding this block or proving a simulation theorem for its counter.

The implementation handles nonsingular square matrices.  It materializes both inverse transformations and uses dense products for certificate composition.  These choices simplify certificate checking, but they do not target high performance.

The polynomial witnesses are loose.  Smaller exponents would require path-sensitive aggregation, tighter coefficient estimates, and probably sparse or incremental certificate updates.

\section{Conclusion}

The verification has four components: the Smith computation, two-sided transformation certificates, termination, and arithmetic cost.  When divisibility fails, the next pass produces a proper common divisor, so \code{Nat.size} of the active pivot decreases.  This measure terminates stabilization and bounds its depth.  Coefficient theorems control the operands stored in the trace, and the work recurrences cover preparation, Hermite phases, searches, clearing, injection, and four-product certificate composition.  The verified codec turns the resulting dimension-and-width bounds into bounds on binary input and output lengths.

Thus one executable Kannan--Bachem reduction returns a canonical Smith matrix with explicit inverse certificates, while the trace generated by that execution satisfies a fixed polynomial arithmetic-cost bound.  Extending the theorem to singular or rectangular matrices, or refining the trace to a structural machine model, remains separate work.

\medskip
\noindent\textbf{AI Use Statement.} OpenAI Codex generated the Lean 4 proofs, and ChatGPT was used for implementation discussions.  The research ideas are the author's own; all statements, definitions, imports, and assumptions were manually reviewed, and the resulting proof terms were checked by Lean's kernel.

\bibliographystyle{plain}
\bibliography{references}

\appendix

\section{Expanded recurrence and degree ledger}
\label{app:ledger}

This appendix expands the abbreviations from \cref{sec:bounds}.  Each definition transcribes the corresponding Lean bound, and the degree column in \cref{tab:degreeledger} is obtained solely from \eqref{eq:degrules}.  The two final degrees can therefore be recalculated from the paper.

\subsection{Scalar and determinant ledger}

The bounded-XGCD coefficient widths and costs are
\begin{align}
 c_X(w)&=2(w+2),\\
 r_X(w)&=w+2c_X(w)+2,\\
 R_X(w)&=2w+2+\DivCost(r_X(w),w)
        +\MulCost(r_X(w)+1,w)\notag\\
 &\qquad+\AddCost(r_X(w),r_X(w)+1+w),\\
 E_X(w)&=1+\DivCost(w,w)+\MulCost(w+1,c_X(w))
       +\AddCost(c_X(w),w+1+c_X(w)),\\
 X(w)&=3+(2w+1)\bigl(E_X(w)+R_X(w)\bigr),\\
 \mathsf{Scalar}(w)&=X(w)+1+3\DivCost(w,w)+1.
 \label{eq:scalarapp}
\end{align}

For Bird evaluation, define
\begin{align*}
 I(n,b,s)&=s\bigl(\natsize(n)+b+3\bigr)+b+2,\\
 c&=I(n,b,n),\qquad
 \Delta=n(c+1),\qquad
 \Theta=n(c+b+1),\\
 E_{\det}(n,b,c)&=
   n\AddCost(c,\Delta)+1+\MulCost(\Delta,b)
   +n\mathsf T(n,c,b)\\
 &\qquad+\AddCost(\Delta+b,\Theta)+\MulCost(1,\Delta).
\end{align*}
Then
\begin{equation}
 \mathsf{Det}(n,b)=n^3E_{\det}\bigl(n,b,I(n,b,n)\bigr)
                  +\MulCost\bigl(1,I(n,b,n)\bigr).
 \label{eq:detexecapp}
\end{equation}

\subsection{Principal and bounded-column width ledger}

For a principal stage with target $t$ and incoming width $b$, let
\begin{align*}
 m(t,b)&=(t+1)((t+1)+(b+1))+2,\\
 o(t,b)&=(b+1)+m(t,b),\\
 c(t,b)&=2(o(t,b)+1)+1,\\
 r(t,b)&=2+c(t,b)+o(t,b),\\
 h(t,b)&=(b+1)+m(t,b)+r(t,b),\\
 i(t,b)&=(h(t,b)+1)+\bigl(t(r(t,b)+3)+1\bigr),\\
 P_{\mathrm{stage}}(t,b)&=(i(t,b)+1)+(m(t,b)+1),\\
 g(t,b)&=2b+2t+2t(t+b)+4.
\end{align*}
The common principal matrix width is
\begin{equation}
 P(n,b)=(b+1)+P_{\mathrm{stage}}(n,g(n,b)).
 \label{eq:principalwidths}
\end{equation}
The forward and inverse prefix widths are
\begin{align}
 U_P(0,b)&=0,\notag\\
 U_P(k+1,b)&=(k+1)+P(k+1,b)+k(k+b)+2,
 \label{eq:principalmultiplierapp}\\
 I_P(0,b)&=0,\notag\\
 I_P(k+1,b)&=(k+1)+b+k\bigl(k+P(k+1,b)\bigr)+2.
 \label{eq:principalinverseapp}
\end{align}

The bounded-column widths are
\begin{align}
 B_C(0,b)&=b+1,\notag\\
 B_C(k+1,b)&=b+k\bigl(\natsize(k+1)+2(b+1)+6\bigr)+1,
 \label{eq:bcintermediateapp}\\
 U_C(0,b)&=0,\notag\\
 U_C(k+1,b)&=\natsize(k+1)+B_C(k+1,b)+D(k,b),
 \label{eq:bcmultiplierapp}\\
 I_C(0,b)&=0,\notag\\
 I_C(k+1,b)&=\natsize(k+1)+b+D(k,B_C(k+1,b)).
 \label{eq:bcinverseapp}
\end{align}

\subsection{Pass certificate-width ledger}

For prepared principal HNF on an active block of order $k+1$, define
\begin{align*}
 P_M(k,b,\delta)&=\natsize(k+1)+\delta+D(k,b),\\
 P_I(k,b,\delta)&=\natsize(k+1)+b+D(k,\delta).
\end{align*}
With $b_L=B_C(k+1,b)$, the unrounded pass width is
\begin{equation}
\begin{split}
 t_0(k,b,\delta)=\max\{&U_C(k+1,b),I_C(k+1,b),\\
 &P_M(k,b_L,\delta),P_I(k,b_L,\delta)\},
\end{split}
\label{eq:passtransformapp}
\end{equation}
and the phase width used in composition is
$t(k,b,\delta)=\max(1,t_0(k,b,\delta))$.

\subsection{Degree calculation}

\begin{longtable}{@{}p{0.38\linewidth}r p{0.46\linewidth}@{}}
\caption{Degree ledger obtained from \eqref{eq:degrules}.  ``Subst.'' means one monotone substitution; ``dim-rec.'' means dimension recursion.}
\label{tab:degreeledger}\\
\toprule
Envelope / paper notation & Degree & Immediate construction \\
\midrule
\endfirsthead
\toprule
Envelope / paper notation & Degree & Immediate construction \\
\midrule
\endhead
addition $\AddCost$ & 9 & constants, two additions, square, multiplication \\
multiplication $\MulCost$ & 13 & operand product with a squared affine width \\
division $\DivCost$ & 11 & four affine/product summands \\
dot term $\mathsf T$ & 16 & $\MulCost+\AddCost$ \\
dense product $\mathsf{MatMul}$ & 19 & $n^3\mathsf T$ \\
certificate composition $\mathsf{Comp}$ & 22 & sum of two constant-scaled products \\
determinant width $D$ & 5 & $n(\natsize(n)+b)+2$ \\
determinant iteration width $I$ & 7 & affine sum containing $n(\natsize(n)+b+3)$ \\
determinant entry cost $E_{\det}$ & 34 & scalar/dot bounds at $I$ \\
determinant execution $\mathsf{Det}$ & 38 & $n^3E_{\det}+\MulCost$ \\
principal stage width $P_{\mathrm{stage}}$ & 19 & seven explicit stage-width constructions \\
principal stage boundary $g$ & 7 & quadratic dimension/input expression \\
principal matrix width $P$ & 172 & substitution $P_{\mathrm{stage}}(n,g(n,b))$ and addition \\
principal multiplier width $U_P$ & 175 & upper bound in \eqref{eq:principalmultiplierapp} \\
principal inverse width $I_P$ & 176 & upper bound in \eqref{eq:principalinverseapp} \\
bounded-column matrix width $B_C$ & 10 & upper bound in \eqref{eq:bcintermediateapp} \\
bounded-column multiplier $U_C$ & 16 & $\natsize+B_C+D$ \\
bounded-column inverse $I_C$ & 61 & $\natsize+b+D(n,B_C)$ \\
bounded-XGCD uniform cost $X$ & 39 & coefficient, reduction, Euclidean-step bounds \\
scalar transition $\mathsf{Scalar}$ & 42 & $X+1+3\DivCost(w,w)+1$ \\
principal dense replay & 551 & \eqref{eq:densereplay} at $P,U_P,I_P$ \\
principal execution $\mathsf{PHNF}$ & 7,312 & scalar substitution at $P$ plus dense replay \\
preparation $\mathsf{Prep}$ & 118 & dim-rec. with determinant-execution step \\
prepared principal $\mathsf{Prepared}$ & 7,315 & preparation, principal, two products \\
bounded-column execution $\mathsf{Column}$ & 205 & scalar transitions plus dense replay \\
clear / inject & 22 / 19 & \eqref{eq:clearbound} / \eqref{eq:injectbound} \\
pass transform $t_0$ & 244 & maximum of four certificate widths \\
pass execution $\mathsf{Pass}$ & 307,232 & column, prepared principal after subst., composition \\
stabilization factor / composition width & 244 / 245 & \eqref{eq:stabfactor} / \eqref{eq:stabcompositionwidth} \\
stabilization transform $R$ & 740 & product in \eqref{eq:stabtransformwidth} \\
stabilization execution width $t_s$ & 743 & $\natsize(d)+2R$ \\
stabilization composition cost $\mathsf{Comp}_s$ & 2,246 & composition at width $t_s$ \\
per-level cost $\mathsf{Level}$ & 307,236 & searches, pass, clear, inject, compositions \\
stabilization cost $\mathsf{Stab}$ & 307,238 & $(w+1)\mathsf{Level}$ \\
Smith transform $\mathsf{Tr}$ & 744 & dim-rec. with step degree 742 \\
work width $W$ & 5 & maximum of input and determinant widths \\
certificate width $\mathsf{Cert}$ & 5,208 & subst. $\mathsf{Tr}(n,W(n,b))$ \\
outer certificate composition & 6,709 & $\mathsf{Comp}(d,R,\max(1,\mathsf{Tr}))$ \\
Smith recursion step & 307,239 & stabilization plus outer composition \\
Smith work $\mathsf{Smith}$ & 307,241 & dim-rec. from zero and the step \\
final cost envelope $d_c$ & 2,150,687 & subst. $\mathsf{Smith}(n,W(n,b))$ \\
matrix encoding at entry width & 11 & equation \eqref{eq:matrixencodingbound} \\
five-matrix output encoding & 19 & equation \eqref{eq:outputencodingbound} \\
output entry width $b_{\mathrm{out}}$ & 5,208 & maximum of $D$ and $\mathsf{Cert}$ \\
final output envelope $d_o$ & 98,990 & output-codec subst. at $b_{\mathrm{out}}$ \\
\bottomrule
\end{longtable}

Two chains are worth checking explicitly.  First,
\[
 \deg\mathsf{Cert}=(\max(1,5)+2)\cdot744=5{,}208.
\]
Second, the pass degree $307{,}232$ dominates the stabilization level; multiplication by $w+1$ gives $307{,}238$, the outer composition has degree $6{,}709$, addition gives a step degree $307{,}239$, and dimension recursion gives
$307{,}241$.  Equations \eqref{eq:costdegreecalc} and
\eqref{eq:outputdegreecalc} then produce the final degrees.

\end{document}